\newif\ifshortver
\algrenewcommand\algorithmicindent{5mm}
\DeclareMathOperator*{\argmax}{arg\,max}
\theoremstyle{theorem}
\newtheorem{theorem}{Theorem}
\newtheorem{lemma}[theorem]{Lemma}
\theoremstyle{definition}
\newtheorem{definition}{Definition}
\newtheorem{example}{Example}
\theoremstyle{remark}
\newtheorem{remark}{Remark}
\newcommand{\conv}{\mathrm{conv}}
\newcommand{\mc}[1]{\mathcal{#1}}
\newcommand{\dR}{\mathcal{R}}
\tikzstyle{src}=[circle,draw=gray!80!black,fill=gray!10,thick,inner
\tikzstyle{dst}=[circle,draw=gray!80!black,fill=gray!10,thick,inner
\tikzstyle{mid}=[circle,draw=gray!80,fill=gray!10,thick,inner
\title{Joint Scheduling and Multiflow Maximization in Wireless Networks}
\begin{document}
\ifshortver
\author{
  \IEEEauthorblockN{Yanxiao Liu$^{\dagger}$, Shenghao Yang$^{*}$ and Cheuk Ting Li$^{\dagger}$}
  \IEEEauthorblockA{%
    $^{\dagger}$Department of Information Engineering, The Chinese University of Hong Kong\\
    $^{*}$School of Science and Engineering, The Chinese University of Hong Kong, Shenzhen\\
    Email: yanxiaoliu@link.cuhk.edu.hk, shyang@cuhk.edu.cn, ctli@ie.cuhk.edu.hk}
}
\else
\author{
Yanxiao Liu, \textit{Graduate Student Member, IEEE}, Shenghao Yang,  \textit{Member, IEEE} and Cheuk Ting Li, \textit{Member, IEEE}
\thanks{
Y. Liu is with the Department of Information Engineering, The Chinese University of Hong Kong, Hong Kong, China. Email: yanxiaoliu@link.cuhk.edu.hk

S. Yang is with the School of Science and Engineering, The Chinese
University of Hong Kong, Shenzhen, Shenzhen 518172, China. Email:
shyang@cuhk.edu.cn

C. T. Li is with the Department of Information Engineering, The Chinese
University of Hong Kong, Hong Kong, China. Email: ctli@ie.cuhk.edu.hk.
}
}
\fi

\maketitle 

\begin{abstract}

Towards the development of 6G mobile networks, it is promising to integrate a large number of devices from multi-dimensional platforms, and it is crucial to have a solid understanding of the theoretical limits of large-scale networks. 
We revisit a fundamental problem at the heart of network communication theory: the maximum multiflow (MMF) problem in multi-hop networks, with network coding performed at intermediate nodes. 
To derive the exact-optimal solution to the MMF problem (as opposed to approximations), conventional methods usually involve two steps: first calculate the scheduling rate region, and then find the maximum multiflow that can be supported by the achievable link rates.
However, the NP-hardness of the scheduling part makes solving the MMF problem in large networks computationally prohibitive. 
In this paper, while still focusing on the exact-optimal solution, we provide efficient algorithms that can jointly calculate the scheduling rate region and solve the MMF problem, thereby outputting optimal values \emph{without} requiring the entire scheduling rate region. 
We theoretically prove that our algorithms always output optimal solutions in a finite number of iterations, and we use various simulation results to demonstrate our advantages over conventional approaches.
Our framework is applicable to the most general scenario in multi-source multi-sink networks: the \emph{multiple multicast} problem with network coding. 
Moreover, by employing a graphical framework, we show that our algorithm can be extended to scenarios where propagation delays are large (e.g., underwater networks), in which recent studies have shown that the scheduling rate region can be significantly improved by utilizing such delays. 
\end{abstract}

\ifshortver
\else
\begin{IEEEkeywords}
Multihop network, network coding, maximum multiflow problem, multicast, propagation delay. 
\end{IEEEkeywords}
\fi

\section{Introduction}
\label{sec::intro}

Over the past years, both the number of users in wireless networks and the services to be provided have experienced significant growth. 
To deploy the next generation mobile system, it is expected that a massive connectivity and emerging applications should be supported. 
For the design of large-scale, highly-connected wireless mobile systems, it is crucial to have a solid understanding of their theoretic limits, and hence we revisit two multiflow optimization problems: the maximum multiflow (MMF) and maximum concurrent multiflow (MCMF) that can be supported by collision-free link schedules of the networks.

These multiflow problems maximize the total or concurrent throughput between multiple source nodes and sink nodes supported by achievable link rates. 
To solve these problems, traditional methods typically first model wireless interference in networks using a link conflict graph~\cite{jain2005impact}. Next, they compute the scheduling rate region, which is a well-known NP-hard problem that is also difficult to approximate~\cite{baker1994approximation}. Finally, they determine the maximum total or concurrent throughput under the constraints defined by the scheduling rate region. 
Due to the hardness of the scheduling problem, it is unrealistic to directly solve the multiflow maximization problems in large-scale networks in this way. 
In~\cite{wan2009multiflows} it has been shown that both and MMF and MCMF problems are NP-hard even in very simple settings. 
They can become even harder when network coding~\cite{ahlswede2000network, li2003linear, yeung2008information} is performed. 
Therefore, joint optimization methods usually focus on either approximate solutions of general networks or exact solutions of restricted networks (see Section~\ref{subsec::review_opt} for a review).

In this paper, from a theoretic perspective, we are interested in the \emph{exact-optimal} solutions of the multiflow problems for \emph{general} multi-hop networks. 
Rather than first solving the scheduling problem and then calculating the multiflow values (referred to as \emph{two-step algorithms}), we design efficient algorithms that jointly calculate the maximum total or concurrent multiflow and the scheduling rate region by employing a decomposition method, thereby requiring only a (possibly very small) subset of the scheduling rate region. 
This joint framework makes our algorithms more practical, while still \emph{provably} guaranteeing optimal (not approximate or converging-to-optimal) solutions to the multiflow maximization problems in a finite number of iterations. 
The efficiency of our approach is demonstrated through simulation results on various network settings. 
Our algorithms are applicable to the most general setting in multi-hop networks: the \emph{multiple multicast} case, where network coding~\cite{ahlswede2000network, li2003linear, yeung2008information} at intermediate nodes is allowed.

Moreover, recent studies~\cite{ma2021rate, yang2023wireless_tit, bai2017throughput, chitre2012throughput, hsu2009st} have shown that in challenging environments (e.g., underwater and deep-space networks), where the propagation delays are non-negligible, the scheduling rate region can be significantly enlarged by utilizing those delays, though it also becomes even harder to compute. 
Benefiting from the generality of our algorithm, we also extend our joint framework to such cases by employing a graphical model, while preserving all its advantages: generality, efficiency, and optimality.

\begin{remark}
    Our algorithms rely on an integer linear programming step and, therefore, do not theoretically guarantee polynomial-time complexity, which is expected due to the NP-hardness of the MMF or MCMF problem~\cite{jain2005impact, wan2009multiflows}. 
    This can be understood in that, in the worst case, one may still have to compute the entire scheduling rate region. 
    However, we use simulation results to demonstrate the efficiency of our algorithms, where the gain comes from the fact that usually only a small subset of the scheduling rate region is needed. 
    Our other contributions mainly lie in the generality and optimality of our framework.
    The closest work to ours is~\cite{traskov2012scheduling}; see Remark~\ref{remark2} and Section~\ref{subsec::review_opt} for the differences.
\end{remark}

The remainder of the paper is organized as follows. 
We first provide a comprehensive literature review in Section~\ref{sec::review}. 
In Section~\ref{sec::no_delay_net}, we describe the network model and problem formulation. 
We propose our algorithm in Section~\ref{sec::alg} and prove the optimality. 
We extend the framework to networks with non-negligible propagation delays in Section~\ref{sec::delay_net}. 
The performance evaluation is in Section~\ref{sec::simu}.

\section{Related Works}
\label{sec::review}

The literature review includes three parts: 
First, we discuss the multiflow problems. 
Second, we review related optimization frameworks and network coding. 
Finally, we review recent works on utilizing non-negligible propagation delays in certain (e.g., underwater and deep-space) networks.

\subsection{Maximum Multiflow Problem}

The maximum multiflow (MMF) and the maximum concurrent multiflow (MCMF) are core problems in the theory of network communications. 
The MMF problem studies the maximum throughput between selected source nodes and sink nodes~\cite{wan2009multiflows, jain2005impact}, and the maximum concurrent multiflow  problem~\cite{shahrokhi1990maximum} models the case where every sender-receivers session transmits messages concurrently. 
The NP-hardness of both problems have been proved in~\cite{wan2009multiflows}, even in very simple network settings. 
In~\cite{kodialam2003characterizing, kodialam2005characterizing}, both the MMF and MCMF problems are discussed under the interference model that nodes cannot transmit and receive simultaneously. 
By enforcing interference constraints on links, \cite{kumar2005algorithmic} guarantees the schedulability and develops constant-approximation algorithms. 
More linear programming formulation and approximation algorithms can be found in~\cite{shahrokhi1990maximum, kodialam2003characterizing, wan2009multiflows}. 
In~\cite{su2008characterizing}, the MMF and MCMF problems are discussed by dividing the cases to full-duplex systems and half-duplex systems, both of which are covered by our interference model in this paper. 
The MMF problem has been extended to unicast networks with network coding~\cite{zhou2013maximum}, where the network coding~\cite{ahlswede2000network, li2003linear, yeung2008information} is treated as a scheme to decrease the impact of wireless interference.

To support the (concurrent) multiflow in networks, it is usually required to first find achievable link rates, which forms the scheduling rate region. 
To solve the scheduling problem, in~\cite{jain2005impact}, the effects of wireless interference can be modeled by a \emph{conflict graph}, and the scheduling problem is equivalent to searching all the maximal independent sets in the conflict graph, which is an NP-hard problem that is also hard to approximate~\cite{baker1994approximation}. 
Due to this hardness, it is computationally prohibitive to calculate the scheduling region before solving the MMF (or MCMF) problem. 
Though our focus is on algorithms that guarantee optimal solutions for general networks, other (decentralized) optimization algorithms with low complexity can be used in practice for either approximate results on general networks or optimal results on restricted networks, which are reviewed as follows.

\subsection{Joint Optimization and Network Coding} 
\label{subsec::review_opt}

Existing works on the MMF (or MCMF) problem~\cite{jain2005impact, wan2009multiflows, zhou2013maximum} only study the \emph{multiple unicast} case, i.e., each source node is paired with one sink node. 
However, we consider \emph{multiple multicast} in general multi-source multi-sink networks, where each of a number of source nodes transmits a message to a set of sink nodes. 
In this scenario, network coding~\cite{ahlswede2000network, li2003linear, yeung2008information} is an effective technique to improve the network performance, and the throughput can increase up to several folds~\cite{katti2006xors, yan2012implicit}.  
The joint consideration of throughput, scheduling and network coding has been widely studied in~\cite{traskov2012scheduling, jones2012optimal, cui2007distributed, wiese2016scheduling, niati2012throughput} for various objectives, e.g., maximizing throughput or minimizing the energy consumption under certain constraints. 
These approaches are either converging-to-optimal with respect to some constraints or only approximate the solutions.

The closest work to ours is~\cite{traskov2012scheduling}, where the authors decompose the joint optimization of scheduling and network coding into two subproblems, similar with us decomposing the joint MMF (and MCMF) and scheduling problems. 
However, some differences are as follows. 
Except we provide a unified framework that also covers the case where the propagation delays are non-negligible and utilized in scheduling (see Section~\ref{subsec::review_delay}), we study the multi-source (instead of single source) multi-sink case, where the trade-off between the rates of the sources becomes an important factor of consideration. 
Moreover, the algorithm in \cite{traskov2012scheduling} is an iterative algorithm that only converges to the optimum, but our algorithms provably output the exact optimum in a finite number of iterations.

\subsection{Networks with Non-negligible Propagation Delays} 
\label{subsec::review_delay}

In the existing theory of terrestrial wireless communications, though the communication media (e.g., radio, light, sound) have nonzero propagation delays, they are usually short and treated as a factor of interference~\cite{proakis2003intersymbol}. 
However, in certain environments, e.g., deep-space and underwater networks, the propagation delays can be significantly longer. 
For example, for sound (whose speed is about $1.5$ kilometers per second) to propagate over a distance of $3$ kilometers, the delay can be about $2$ seconds. 
Recent studies~\cite{ma2021rate, yang2023wireless_tit, fan22isit,  hsu2009st, guan2011mac, chitre2012throughput, bai2017throughput} show that it is promising to \emph{utilize} the delays to significantly improve the network performance. 
For such scenarios, mixed integer linear programming for some heuristic algorithms~\cite{hsu2009st, guan2011mac} and dynamic-programming based algorithms~\cite{chitre2012throughput} were proposed.

In~\cite{yang2023wireless_tit, ma2021rate}, by extending the conflict graph~\cite{jain2005impact} to a \emph{weighted} graph, the scheduling rates can be exactly characterized with even higher complexity. 
We extend their formulation in our framework, and provide an efficient algorithm that solves the multiflow maximization problems while also utilizing the delays to improve the scheduling rates. 
We note that existing methods for the multiflow problems cannot be directly extended to this case.

\section{Network Model}
\label{sec::no_delay_net}
In this section, we define the network model and formulate the MMF and MCMF problems. 
We note that we include propagation delays in the network model for the purpose of generalizing our framework in Section~\ref{sec::delay_net}, but we will first focus on the algorithm for networks without considering delays, as our main contributions lie in the joint algorithm and its generality and optimality.

\subsection{Network Model}
We use a link-wise network model~\cite{ma2021rate, yang2023wireless_tit, wan2009multiflows, jain2005impact}, where each link is associated with a \emph{collision set}, including all the links that can be interfered by it. 
It is called the binary interference model~\cite{jain2005impact, ma2021rate}, and it is not difficult to cover the physical interference model by signal-to-interference-and-noise ratio~\cite{wiese2016scheduling, yang2023wireless_tit}. 
We assume the network is acyclic and \emph{discrete}~\cite{ma2021rate, fan22isit, yang2023wireless_tit} in the sense that time is slotted and the link delays are multiples of a length of a time slot, which is justified in~\cite{fan22isit}. 
The intermediate nodes can wait until enough packets are collected before performing coding on the packet.

The network can be modeled by a tuple $\mc N=(\mc V, \mc L,\mc I, D)$, where $\mc V$ is the \emph{node set},
$\mc L \subseteq {\mc V}^2$ is the \emph{link set}, $\mc I=(\mc I(l), l\in\mc L)$ is the set of  \emph{collision sets} where $l'\in\mc I(l)$ if $l'$ is in the interference range of $l$
and $D : {\mc L}^2 \to \mathbb{Z}$ is the link-wise delay matrix specifies the delays between links. 
We assume each link has a unit bandwidth and allow parallel links between nodes. 
$(\mc L,\mc I, D)$ can form a weighted, directed graph $\mc N$ where $\mc L$ is the finite vertex set, $(l,l')$ is an edge if $l'\in \mc I(l)$, and $D(l,l')$ is the weight on the directed edge $(l,l')$, which degrades to an unweighted graph $(\mc L, \mc I)$ if delays are ignored~\cite{wan2009multiflows, zhou2013maximum}. 
This graphical approach helps the discussion on our algorithms.

We now describe the communication task over the network $\mc N=(\mc V, \mc L,\mc I, D)$. 
For $k\in\mathbb{N}_+$, let $\mc S = \{s_1, \ldots, s_{k}\}\subseteq \mc V$ be the set of source nodes. 
We assume the information sources at different source nodes are mutually independent. 
Each source node $s_i$ is associated with a set of sink nodes $\mc D_{s_i}\subseteq \mc V$ that have to decode the information at $s_i$. 
We allow a node to be both a source node and a sink node corresponding to another source node. 
For $i\neq j$, we may have $\mc D_{s_i}\cap \mc D_{s_j}\neq \emptyset$, i.e., different source nodes can share same sink nodes. 
Each link $l\in\mc L$ represents a point-to-point channel with unit capacity. 
The sets of input channels and output channels of a node $v\in\mc V$ are denoted
by $\text{In}(v) \subseteq {\mc L}$ and $\text{Out}(v) \subseteq {\mc L}$, respectively.

To explain the model, we use a line network~\cite{ma2021rate, yang2023wireless_tit, chitre2012throughput, bai2017throughput} as an example.

\begin{example}
\label{eg::N41_noDelay}

Consider an $L$-hop unicast line network: there are $L+1$ nodes $\mc V = \{1, 2, \ldots, L+1\}$, with link set
\begin{equation*}
\mc L = \{l_i\triangleq (i,i+1), i = 1,\ldots,L\}. 
\end{equation*}
Each link has a unit delay. 
We consider a $K$-hop interference model: the reception of a node has possible collisions from nodes in $K$ hops. 
The nodes are half-duplex. 
The collision set of $l_i$ is
\begin{equation}
\label{eq:exi}
\mc I_K(l_i) = \{l_j: j\neq i, |i+1-j| \leq  K\}. 
\end{equation}
The link $l_i$ is active in time slot $t$ if node $i$ sends a signal in time slot $t$ to node $i+1$. 
Hence the link-wise delay matrix is 
\begin{equation}
\label{eq:exd}
D(l_i,l_j) =1 - |i+1-j|. 
\end{equation}
We denote it by $\mc N_{L,K}^{D=1}$, and it can be represented by a graph, where $\mc N_{4,1}^{D=1}$ is shown in Figure~\ref{fig:weighted_N41} as an example. 
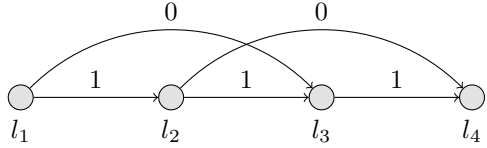
\begin{figure}
\centering
\begin{tikzpicture}[link/.style={circle, draw, thin,fill=darkgray!15}]
\foreach \place/\x in {{(-6,0)/1},{(-4,0)/2}, {(-2,0)/3},{(0,0)/4}}
\node[link,label=below:$l_\x$] (u\x) at \place {};

\draw[->] (u1) to node[above] {$1$} (u2);
\draw[->] (u2) to node[above] {$1$} (u3);
\draw[->] (u3) to node[above] {$1$} (u4);

\draw[->, bend left=45] (u1) to node[above] {$0$} (u3);
\draw[->, bend left=45] (u2) to node[above] {$0$} (u4); 

\end{tikzpicture}
\vspace{-11pt}
\caption{$\mc N_{4,1}^{D=1}$: nodes represents network link, edges represent the collision relations between nodes, and edge weights represent propagation delays. 
}
\label{fig:weighted_N41}
\end{figure}
\end{example}

\subsection{Collision-free Schedules and Rate Region}

We define collision-free schedules, similar to ~\cite{ma2021rate,yang2023wireless_tit}. 
Since we assume the time is slotted, when link $l$ is active at time slot $t$ and link $l'\in \mc I(l)$ is active at time slot $t+D(l,l')$, we say a \emph{collision occurs}. 
In each time slot we use a binary number to indicate whether a link sends messages or not. 
Hence we use an infinite binary matrix $S : \mc L \times \mathbb{N} \to \{0,1\}$ with rows indexed by $\mc L$ and columns indexed by $\mathbb{N}$ to specify a \emph{schedule}: $S(l,t)=1$ indicates that link $l$ is active in time slot $t$, and $S(l,t)=0$ indicates it is inactive. 
$S(l,t)$ has a collision in $\mc N$ if $S(l,t)=S(l',t+D(l,l'))=1$ for a certain $l'\in\mc I(l)$. Otherwise $S(l,t)$ is \emph{collision free}. 
A schedule $S$ is collision free if $S(l,t)$ is collision free for all $(l,t)$. 
There are different (though with similar ideas) definitions if delays are simply ignored, e.g., \cite{wan2009multiflows}, but we aim to provide a framework general enough to cover the networks with non-negligible delays.  
For a collision-free schedule $S$ and a link $l$, the link rate is
\begin{equation}
\label{eq::rate_vectors}
R_{S}(l) = \lim_{T\rightarrow \infty} \frac{1}{T} \sum_{t=0}^{T-1} S(l,t).
\end{equation}
If $R_S(l)$ exists for all $l\in \mc L$, we call $R_S = (R_S(l),l\in \mc L)$ the \emph{rate vector} of $S$.  
For a network $\mc N=(\mc V , \mc L, \mc I, D)$, a rate vector $R=(R(l),l\in \mc L)$ is said to be \emph{achievable} if for all $\epsilon>0$, there exists a collision-free schedule $S$ such that
$R_S(l)> R(l)-\epsilon$  for all $l\in\mc L$. 
For a link $l$, the rate $R(l)$ can stand for the maximum number of information symbols that can be sent on the channel per time slot. 
Then each achievable rate vector can be viewed as a rate constraint for the network.

The collection $\dR(\mc N)$ of all the achievable rate vectors is called the \emph{(scheduling) rate region} of $\mathcal{N}$. 
We may use $\dR$ instead of $\dR(\mc N)$ to simplify the notation when the context is clear. 
It is proved in~\cite{ma2021rate} that $\dR$ is a convex polytope, and can be achieved by using periodic collision-free schedules.

\subsection{Problem Formulation}
\label{subsec::NC}

We define the maximum multiflow (MMF) and the maximum concurrent multiflow (MCMF) problems. 
Most existing literature consider \emph{multiple unicast}, i.e., each source only has one corresponding sink with a certain demand~\cite{su2008characterizing, wan2009multiflows, zhou2013maximum}. 
However, we discuss general \emph{multiple multicast}, with network coding~\cite{ahlswede2000network, li2003linear, yeung2008information} on intermediate nodes, where the nodes can encode their received data before passing them on.

For each multicast session, one source corresponds to multiple sinks. 
Though we use network coding to attain the maximum information flow in a session, we do not consider coding \emph{between} sessions (i.e., inter-session network coding~\cite{eryilmaz2007control, yang2008linear}) in this paper for the sake of simplicity, since it is in general a hard problem~\cite{langberg2011hardness} and can even be undecidable~\cite{li2023undecidability}. We say 
\begin{equation}
\label{eq::flow}
F = (F(l) \in\mathbb{N}_{\geq 0} : \, l\in\mc L)
\end{equation}
is a valid flow
from source node $s \in \mc V$ to sink node $t \in \mc V$ with respect to a rate vector $R$ 
if it satisfies: 
\begin{itemize}
\item $0\leq F(l)\leq R(l)$ for all $l \in \mc L$, i.e., the flow along link $l$ does not exceed the rate constraint $R(l)$. 
\item The flow conservation equation $\sum_{l\in \text{In}(v)}F(l) =  \sum_{l\in \text{Out}(v)}F(l)$ for all $v\in\mc V \backslash \{s,t\}$. 
\end{itemize}

We see the flow $ \sum_{l\in \text{Out}(s)}F(l)$ out of $s$ equals to the flow $\sum_{l\in \text{In}(t)}F(l)$ into $t$, and this value is called the \emph{value} of $F$, denoted as $\mathrm{val}(F)$. 
We say $F$ is a \emph{max-flow} from $s$ to $t$ with respect to $R$ if $F$ is a flow and has a value no smaller than the value of any other flow from $s$ to $t$ with respect to $R$.

\subsubsection{Maximum Multiflow (MMF) Problem}
\label{subsec::Multiflow}

Consider a source $s_i$ multicasts a message to nodes in the set $\mc D_{s_i} = \{t_{i,1},\ldots ,t_{i,k_i}\}$, with network coding~\cite{ahlswede2000network, li2003linear, yeung2008information}, at a rate
\[
v_i = \min_{j} \mathrm{val}(F_{i,j}),
\]
where $F_{i,j}$ is a flow from $s_i$ to $t_{i,j}$ for each $j$ and the rate of communication along link $l$ is $\max_{j}F_{i,j}(l)$.

We now put the flows from the source nodes $s_1, \ldots, s_{k}$ together. 
Fix a rate vector $R$. 
Link $l$ has to accommodate all these $k$ flow requirements simultaneously, i.e., $\sum_{i=1}^k \max_{j}F_{i,j}(l) \le R(l)$ for all $l \in \mc L$. 
We maximize the sum of the rates of multicasting these $k$ sources, and it is called the maximum multiflow (MMF) problem, which is formulated by the following linear program, combining the linear program for multiple unicast~\cite{jain2005impact, wan2009multiflows, zhou2013maximum} and the program for single multicast~\cite{traskov2012scheduling}: 
\begin{align}
&  \text{LP-MMF}(\tilde{\mathcal{R}}): \nonumber\\ 
&\quad \text{maximize}\quad \sum_{i=1}^k v_i \\
& \quad \text{subject to} \nonumber\\
&\quad \, F_{i,j} \text{ is a valid flow},\, \forall  i\in [k], j\in [k_i], \nonumber\\ 
&\quad \,\sum_{l \in \mathrm{Out}(s_i)}F_{i,j}(l) = \sum_{l \in \mathrm{In}(t_{i,j})}F_{i,j}(l) = v_i,\, \forall  i\in [k], j\in [k_i], \nonumber\\
&\quad \,G_{i}(l) \geq F_{i,j}(l), \,\forall l\in \mc L, i\in [k], j\in [k_i], \nonumber\\
&\quad \, \sum_{i=1}^k G_i(l) \leq  R(l), \,\forall l\in \mc L, \label{eq:dual}\\
&\quad \, R \in \tilde{\mathcal{R}}, \nonumber
\end{align}
where $[k]:= \{1,\ldots,k \}$, $k, k_i\in\mathbb{N}_+, \forall i$. 
The linear program takes a polytope $\tilde{\mathcal{R}}$ (a subset of the scheduling rate region) as an input.
The variables $G_i(l)$, $i\in [k]$, $l \in \mc L$ are used to impose the constraint $\sum_{i=1}^k \max_{j}F_{i,j}(l) \le R(l)$. 
The constraint~\eqref{eq:dual} gives a dual vector that will be used in our algorithms, and the dual variable corresponding to link $l$ means how sensitive the optimization objective is to the rate constraint $R(l)$.

\subsubsection{Maximum Concurrent Multiflow (MCMF) Problem}
While the MMF problem is to find the link schedule that can support the maximum total 
rate of transmission of the sources, the maximum concurrent multiflow (MCMF) problem is to 
find the link schedule 
such that all the sources can transmit concurrently at the maximum rate~\cite{shahrokhi1990maximum, kodialam2003characterizing, wan2009multiflows}. 
The settings of MCMF problems in~\cite{kodialam2003characterizing, wan2009multiflows} are also for multiple unicast.

More generally, instead of maximizing the sum rate $\sum_{i=1}^k v_i$, we maximize $\phi$ such that the source $s_i$ can multicast at a rate $v_i = \phi \gamma_i$, where $\gamma_i$ is the desired traffic rate at $s_i$.
The MCMF problem is formulated as follows~\cite{wan2009multiflows,traskov2012scheduling}: 
\begin{align}
& \text{LP-MCMF}(\tilde{\mathcal{R}}): \nonumber\\ 
&\quad  \text{maximize}\quad \phi \\
& \quad \text{subject to} \nonumber\\
&\quad\, F_{i,j} \text{ is a valid flow},\, \forall  i\in [k], j\in [k_i], \nonumber\\ 
& \quad \sum_{l \in \mathrm{Out}(s_i)}F_{i,j}(l) = \sum_{l \in \mathrm{In}(t_{i,j})}F_{i,j}(l) = \phi \gamma_i,\, \forall  i\in [k], j\in [k_i], \nonumber\\
&\quad\,  G_{i}(l) \geq F_{i,j}(l), \,\forall l\in \mc L, i\in [k], j\in [k_i], \nonumber\\
&\quad \, \sum_{i=1}^k G_i(l) \leq  R(l), \,\forall l\in \mc L, \label{eq:dual_mcmf} \\
&\quad \, R \in \tilde{\mathcal{R}}. \nonumber
\end{align}
We also utilize the dual vector given by \eqref{eq:dual_mcmf} in the algorithms.

\section{Algorithm}
\label{sec::alg}

In this section, we present our main algorithm that jointly computes the MMF (or MCMF) and the scheduling rate region, thereby provably outputting the exact-optimal solution while requiring only a subset of the scheduling rate region. 
Note that in this section, we focus on networks without considering delays, as the algorithm, along with its generality and optimality, constitutes our main contributions.

Since a collision-free schedule can be found by searching an independent set in the graph $(\mc L, \mc I)$~\cite{ma2021rate}, we attach a weight $a_i \ge 0$ to link $l_i$, and would like to maximize the weighted total rate, i.e., for the scheduling rate region $\mc R$, we solve
\begin{equation}
\label{eq::inner_prod}
\argmax_{R\in \mc R} \langle\mathbf{a}, R\rangle,
\end{equation}
where $\mathbf{a} = (a_i)_{i=1,\ldots,|\mc L|}$, and $\langle\cdot,\cdot \rangle$ is the inner product.

\begin{remark}
    For~\eqref{eq::inner_prod}, the objective is to maximize the weighted sum rate instead of just the sum rate, since a different weight vector could be used in each iteration of our algorithms (see step $4$ of Algorithm~\ref{alg::sche_NC_1} or step $5$ of Algorithm~\ref{alg::sche_NC_2}). 
    These weights are also crucial for the graphical approach we propose for Algorithm~\ref{alg::sche_NC_2}. 
\end{remark}

This corresponds to a weighted maximal independent set problem~\cite{wan2009multiflows,traskov2012scheduling} that can be solved by integer linear programming (ILP) by maximizing over $S(l_i) \in \{0,1\}$ for $l_i\in\mc L$: 
\begin{align*}
\text{ILP: }\quad \text{maximize}\quad & \sum_{i=1}^{|\mc L|} a_i S(l_i) \\
\text{subject to}\quad & S(l_i) + S(l_j)\leq 1,  \quad \forall\, l_i,l_j:\, l_j\in \mc I(l_i).
\end{align*}
The solution gives us a maximal independent set of $(\mc L, \mc I)$, and the corresponding achievable rate vector is $S=(S(l_i), l_i\in\mc L)$, which is a vertex of the scheduling rate region $\mc R$.

Based on~\eqref{eq::inner_prod}, we iteratively search the MMF or MCMF and the scheduling region. 
Even though our target is the optimal value instead of approximated or converging-to-optimal values, we show that it is unnecessary to find the entire scheduling region before solving the MMF or MCMF problem.

Suppose $\mathrm{flow}(\cdot)$ is the function for calculating the MMF or MCMF in a given polytope, which can be a subset of the scheduling region. 
From $i=1$, in each iteration, the algorithm works as follows: 
\begin{enumerate}
\item We start with a subset of the scheduling region $\mc R_i$, which is formed by the vertices of $\mc R$ we have known (it is reasonable to assume some rate vectors are known, e.g., by activating the first link all the time and inactivating others, the vector $[1,0,\ldots,0]^\intercal$ is achievable). 
In the first iteration, we start with an arbitrarily chosen rate vector $R_1$, i.e, $\mc R_1 =\{R_1\} $. 
We run the linear program LP-MMF$(\mc R_i)$ (or LP-MCMF$(\mc R_i)$) to obtain the dual vector $\mu_i$, corresponding to the constraint in~\eqref{eq:dual} (or~\eqref{eq:dual_mcmf}).

\item By the ILP, we find a new rate vector $R_{i+1}$ by
\begin{equation}
    R_{i+1}=\argmax_{R\in\mc R} \langle\mu_i, R\rangle . 
    \label{eq:Ri_argmax}
\end{equation}  

\item We update the subset of the scheduling region by computing the convex hull $\mc R_{i+1} =\conv \left(\mc R_i\cup \{R_{i+1}\}\right) $.

\item If $\langle 
\mu_{i}, R_{i+1}
\rangle = \max_{R\in \mc R_{i}}
\langle 
\mu_{i}, R
\rangle $, the algorithm terminates and outputs the last optimal value of LP-MMF$(\mc R_i)$ (or LP-MCMF$(\mc R_i)$); otherwise it comes back to step $1$ and continues. 
\end{enumerate}

We then prove that Algorithm~\ref{alg::sche_NC_1} will terminate and output the maximum (concurrent) multiflow in finite iterations.

 \begin{algorithm}[tb]
\caption{}
\label{alg::sche_NC_1}
\textbf{Input}: a network $(\mc V, \mc L,\mc I)$

\textbf{Output}: maximum multiflow $v$	
\begin{algorithmic}[1]
\State Start with any rate vector $R_1\in \mc R$, $\mathcal{R}_1 \leftarrow \{R_1\}$
\For{$i=1,2,\ldots$}
\State Run $v_i \leftarrow$LP-MMF$(\mc R_i)$ (or $v_i \leftarrow$LP-MCMF$(\mc R_i)$) and obtain the dual vector $\mu_i$
\State Run ILP to find $R_{i+1} \leftarrow \argmax\limits_{R\in\mc R} \langle\mu_i, R\rangle $
\State $\mc R_{i+1} \leftarrow \conv \left(\mc R_i\cup \{R_{i+1}\}\right) $
\If{$\langle 
\mu_{i}, R_{i+1}
\rangle = \max_{R\in \mc R_{i}}
\langle 
\mu_{i}, R
\rangle $}\\
$\qquad $
\Return $v_i$
\EndIf
\EndFor
\end{algorithmic}
\end{algorithm}

\begin{theorem}[Optimality]
\label{thm::optimality_alg1}
For a network $\mc N=(\mc V, \mc L,\mc I)$, Algorithm~\ref{alg::sche_NC_1} will terminate and output the maximum multiflow (or the maximum concurrent multiflow). 
\end{theorem}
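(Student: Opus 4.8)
The plan is to recognize Algorithm~\ref{alg::sche_NC_1} as a column-generation (Dantzig--Wolfe) procedure in which the ``columns'' are the vertices of the scheduling rate region $\mc R$: the ILP of step~4 is the pricing subproblem and LP-MMF$(\mc R_i)$ is the restricted master problem. I would prove finite termination and optimality-at-termination separately, carrying out the details only for MMF, since LP-MCMF has the same structure — the linking constraint $\sum_i G_i(l)\le R(l)$ with dual vector $\mu_i$, together with $R\in\mc R_i$ — and the argument transfers verbatim. Throughout I would assume (as we may, by equivalence) that the redundant bounds $F_{i,j}(l)\le R(l)$, implied by $F_{i,j}(l)\le G_i(l)\le\sum_{i'}G_{i'}(l)\le R(l)$, have been dropped from the LP, so that $\mu_i$ is the unique dual multiplier coupling the flows to $R$.

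\emph{Finite termination.} By~\cite{ma2021rate}, $\mc R$ is a bounded convex polytope, and each $R_{i+1}$ returned by the ILP is a $\{0,1\}$-valued vertex of $\mc R$; hence every iterate lies in one fixed finite set. By induction $\mc R_i\subseteq\mc R$, so each LP-MMF$(\mc R_i)$ is feasible (the zero flow with $R=R_1$ is feasible) and bounded (flows are dominated by $R\in\mc R$), and strong LP duality applies. If the stopping test in step~6 fails at iteration $i$, then since $R_{i+1}$ maximizes $\langle\mu_i,\cdot\rangle$ over $\mc R\supseteq\mc R_i$ we have $\langle\mu_i,R_{i+1}\rangle>\max_{R\in\mc R_i}\langle\mu_i,R\rangle$, which forces $R_{i+1}\notin\mc R_i$ and in particular $R_{i+1}\notin\{R_1,\dots,R_i\}$. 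Thus, as long as the test keeps failing, the iterates are pairwise distinct while lying in a fixed finite set, so the test must succeed after finitely many iterations.

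\emph{Optimality at termination.} Let $U$ be the optimal value of LP-MMF$(\mc R)$; since enlarging $\mc R_i$ to $\mc R$ only relaxes the feasible set, $v_i\le U$, and the work is the reverse bound. I would argue via LP duality. Eliminating $R$ by $R=\sum_k\lambda_k R^{(k)}$ over the vertices $R^{(k)}$ of $\mc R_i$ turns LP-MMF$(\mc R_i)$ into an ordinary LP whose only nonzero right-hand side is the ``$1$'' of $\sum_k\lambda_k=1$; the dual objective therefore reduces to the corresponding multiplier $\rho$, so $v_i=\rho^\star$, and dual feasibility forces $\rho^\star\ge\langle\mu_i,R^{(k)}\rangle$ for every vertex, i.e. $v_i\ge\max_{R\in\mc R_i}\langle\mu_i,R\rangle$. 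Keeping the same dual values but raising $\rho$ to $\max_{R\in\mc R}\langle\mu_i,R\rangle$ gives a dual-feasible point of LP-MMF$(\mc R)$ — the only new dual constraints are $\rho\ge\langle\mu_i,R^{(k)}\rangle$ over the vertices of $\mc R$, all satisfied — so weak duality yields $U\le\max_{R\in\mc R}\langle\mu_i,R\rangle$. Finally the stopping test, together with $R_{i+1}=\argmax_{R\in\mc R}\langle\mu_i,R\rangle$, says exactly that $\max_{R\in\mc R}\langle\mu_i,R\rangle=\max_{R\in\mc R_i}\langle\mu_i,R\rangle$, hence $U\le\max_{R\in\mc R_i}\langle\mu_i,R\rangle\le v_i$ and $U=v_i$. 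The transparent reading is that $\mu_i$ is a supergradient of the concave value function $R\mapsto f(R)$ — the maximum multiflow under the exact cap $R$ — at the optimal rate vector $R_i^\star$, that optimality of $R_i^\star$ over $\mc R_i$ makes $\langle\mu_i,R-R_i^\star\rangle\le 0$ on $\mc R_i$, and that the stopping test extends this inequality to all of $\mc R$, so $f(R)\le f(R_i^\star)+\langle\mu_i,R-R_i^\star\rangle\le v_i$ there.

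\emph{Main obstacle.} Finite termination is routine; the crux is the optimality step — that the \emph{single} dual direction $\mu_i$ produced by the restricted master certifies global optimality the moment the pricing ILP cannot beat it. The care needed is (i) to pin $\mu_i$ down as the unique multiplier coupling flows to $R$ (hence the preliminary removal of the redundant bounds $F_{i,j}(l)\le R(l)$), and (ii) to verify that passing from $\mc R_i$ to $\mc R$ leaves every other dual constraint intact and only introduces constraints of the form $\rho\ge\langle\mu_i,\cdot\rangle$ on the new vertices; once these are checked, the stopping test is precisely the classical column-generation optimality criterion $\max_{R\in\mc R}\langle\mu_i,R\rangle\le\max_{R\in\mc R_i}\langle\mu_i,R\rangle$ and the proof closes as above, with LP-MCMF handled identically.
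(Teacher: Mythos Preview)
Your proof is correct and, at bottom, rests on the same duality certificate as the paper's: at termination the equality $\max_{R\in\mc R}\langle\mu_i,R\rangle=\max_{R\in\mc R_i}\langle\mu_i,R\rangle$ lets the dual vector $\mu_i$ from the restricted problem certify global optimality over $\mc R$. The presentations differ, though. The paper argues via a Lagrangian reformulation---rewriting LP-MMF$(\mc R_{i'})$ as $\max\{\mathrm{flow}(R)-\chi_{\mc R_{i'}}(\bar R):R=\bar R\}$, identifying $\mu_{i'}$ as the multiplier on $R=\bar R$, and using subgradient optimality to conclude that $R^*$ simultaneously maximizes $\mathrm{flow}(R)-\langle\mu_{i'},R\rangle$ globally and $\langle\mu_{i'},R\rangle$ over $\mc R_{i'}$---which is exactly your ``transparent reading'' paragraph. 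Your primary argument instead makes the Dantzig--Wolfe structure explicit: you write out the dual of the restricted master, identify the objective as the simplex multiplier $\rho$, and extend the dual feasible point to the full master by raising $\rho$. This buys a more self-contained LP argument that avoids subgradient calculus, and your preliminary removal of the redundant bounds $F_{i,j}(l)\le R(l)$ is a point of extra care the paper does not spell out (without it, $R$ couples to the flows through two families of constraints and the identification of $\mu_i$ with the full sensitivity vector is not automatic). Conversely, the paper's Lagrangian phrasing is shorter and makes the concavity/supergradient picture immediate.
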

\begin{proof}
The scheduling rate region is a polytope~\cite{jain2005impact} with a finite number of vertices, and hence Algorithm~\ref{alg::sche_NC_1} will eventually terminate, since the worst case is that all the vertices are found to solve the maximum (concurrent) multiflow problem.

We then show that the output is optimal; that is, when the algorithm terminates, the output is exactly the MMF or MCMF. 
We use $\mathrm{flow}(R)$ to denote the function that outputs the MMF or MCMF with respect to a rate vector $R$.

Denote $\mc R$ as the entire rate region (which may not need to be found), and $\mc R_i$ is the subset of $\mc R$ found until iteration $i$. 
Suppose the algorithm terminates at iteration $i'$, i.e., $\langle \mu_{i'}, R_{i'+1} \rangle = \max_{R\in \mc R_{i'}} \langle \mu_{i'}, R \rangle $. By substituting \eqref{eq:Ri_argmax}, 
\begin{align}
    \max_{R\in \mc R} \langle \mu_{i'}, R \rangle = \max_{R\in \mc R_{i'}} \langle \mu_{i'}, R \rangle. \label{eq:max_same}
\end{align}

Suppose the optimal $R$ in LP-MMF$(\mc R_{i'})$ (or LP-MCMF$(\mc R_{i'})$) is $R^*$.
Note we can write LP-MMF$(\mc R_{i'})$ as
\begin{align}
\text{maximize} & \;\;\mathrm{flow}(R) - \chi_{\mc R_{i'}}(\bar{R}) \nonumber\\
\text{subject to} & \;\; R=\bar{R}, \label{eq:lp_simplified}
\end{align}
where $\chi_{\mc R_{i'}}(\bar{R})$ is the characteristic function (which is $0$ if $\bar{R} \in \mc R_{i'}$, or $\infty$ otherwise), which forces $\bar{R} \in \mc R_{i'}$. The dual vector $\mu_{i'}$ corresponding to the constraint in~\eqref{eq:dual} (or~\eqref{eq:dual_mcmf}) is the same as the dual vector corresponding to the constraint $R=\bar{R}$ in \eqref{eq:lp_simplified}. Considering the Lagrangian of \eqref{eq:lp_simplified}, at the optimum $(R^*,\bar{R}^*)$, the subgradient satisfies $0 \in \partial_{R^*} \mathrm{flow}(R^*) - \partial_{R^*} \chi_{\mc R_{i'}}(\bar{R}^*) - \mu_{i'}$ and $0 \in \partial_{\bar{R}^*} \mathrm{flow}(R^*) - \partial_{\bar{R}^*} \chi_{\mc R_{i'}}(\bar{R}^*) + \mu_{i'}$, and we have $R^* = \bar{R}^*$.
Hence, $R^*$ maximizes $\mathrm{flow}(R) -\langle \mu_{i'}, R \rangle$ for $R\in \mathbb{R}^{|\mathcal{L}|}_{\ge 0}$, and maximizes $\langle \mu_{i'}, R \rangle$ for $R\in \mc R_{i'}$. 

Fix any $R'\in \mc R$.
Since $R^*$ maximizes $\mathrm{flow}(R) -\langle \mu_{i'}, R \rangle$ for $R\in \mathbb{R}^{|\mathcal{L}|}_{\ge 0}$, 
\begin{align*}
\label{eq::alg_proof_2}
& \mathrm{flow}(R^*) - \langle \mu_{i'}, R^* \rangle \\
&\geq \mathrm{flow}(R') - \langle \mu_{i'}, R' \rangle \\
& \ge \mathrm{flow}(R') - \langle \mu_{i'}, R^* \rangle,
\end{align*}
where the last inequality is by \eqref{eq:max_same} and the fact that $R^*$ maximizes $\langle \mu_{i'}, R \rangle$ for $R\in \mc R_{i'}$.
Therefore, $R^*$ maximizes $\mathrm{flow}(R)$ for $R\in \mc R$ and the proof is finished. 
\end{proof}

\begin{remark}     
\label{rmk::ILP}
Algorithm~\ref{alg::sche_NC_1} requires integer linear programming, hence it does not have a polynomial time complexity, which is expected since this problem is NP-hard~\cite{jain2005impact, wan2009multiflows}. 
The efficiency will be verified via simulation results in Section~\ref{sec::simu}. 
Algorithm~\ref{alg::sche_NC_1} and Theorem~\ref{thm::optimality_alg1} pave the way to the Algorithm~\ref{alg::sche_NC_2} (and its optimality) for networks with non-negligible propagation delays, see Section~\ref{sec::delay_net} for a detailed discussion. 
\end{remark}

\begin{remark} 
\label{remark2}
In~\cite{traskov2012scheduling}, an algorithm based on subgradient optimization that decomposes the problem into two parts has been discussed. 
Though it shares some similarities with ours, our algorithm is guaranteed to find the optimum exactly in a finite number of steps (assuming access to an integer linear programming algorithm), whereas~\cite{traskov2012scheduling} is an iterative algorithm that only converges to the optimum. 
As we will see in the following sections, terminating in a small number of steps is especially important for networks with non-negligible delays, since the update of the subset $\mc R_i$ of the scheduling region is the bottleneck of the algorithm with exponential time complexity, and should be performed as little as possible. 
\end{remark}

\begin{example}
\label{eg::2_line}
Suppose we have a $3$-node, $2$-link unicast network, denoted by $\mc N_{2,1}^{D=0}$, under the $1$-hop interference model, and the propagation delay on each link is negligible (i.e., zero). 
Starting from $R_1=\begin{bmatrix}
1&0
\end{bmatrix}^\intercal$, 
we solve 
\begin{align*}
\max  \quad& v =R(l_1)=R(l_2)\\
\text{s.t. } &   R(l_1) \leq 1,  \\
&   R(l_2) \leq 0,  \\
&    R(l_1), R(l_2) \geq 0,
\end{align*} 
which gives us a maximum flow of value $v=0$ in $\mc R_i = \{R_1\}$. 
We then use the ILP to find another rate vector: 
\begin{align*}
\max\quad & R(l_1) + R(l_2) \\
\mathrm{s.t.}\quad  & R(l_1)+R(l_2)\leq 1,
\end{align*}
which gives $R_2=\begin{bmatrix}
0&1
\end{bmatrix}^\intercal$ and $\mc R_2 = \conv(R_1, R_2)$. 
Next, 
\begin{align*}
\max  \quad& v=R(l_1)=R(l_2)\\
\text{s.t. } & R(l_1)+R(l_2)\leq 1 \\
 &R(l_1),R(l_2) \geq 0
\end{align*} 
gives us the maximum flow in $\mc R_2$,  $ v=1/2$. 
We can verify that the resulting dual vector $\mu_2$ in this iteration gives 
\begin{equation*}
f(\mc R_2, \mu_2)=  \argmax\limits_{R\in\mc R_2}  f(\mc R_2, \mu_2),
\end{equation*}
which meets the condition that the algorithm terminates. 
It is also straightforward  to verify that $1/2$ is indeed the maximum flow value. 
\end{example}

\section{Networks with Non-negligible Delays}
\label{sec::delay_net}

In this section, we extend our previous discussions to a special case: networks with non-negligible propagation delays (e.g., underwater or deep-space environments). 
Recent studies~\cite{chitre2012throughput, ma2021rate, yang2023wireless_tit, bai2017throughput, hsu2009st} have shown that, for such networks, instead of using guard intervals to mitigate the effects of delays, we can actually \emph{utilize} the propagation delays to significantly improve the scheduling rate region.

We adopt a graphical approach, build upon the framework proposed by~\cite{ma2021rate, yang2023wireless_tit}, and provide an algorithm in the same spirit as Algorithm~\ref{alg::sche_NC_1}. 
The proposed Algorithm~\ref{alg::sche_NC_2} also provably outputs exact-optimal solutions in a finite number of iterations, and maintains all the advantages (generality, efficiency, and optimality).

The key to solve the MMF or MCMF problems in the networks with non-negligible delays is, we need a function similar to~\eqref{eq::inner_prod} that can output a vertex of the scheduling region in a time complexity at most exponential in $|\mathcal{L}|$ (which in turn will be polynomial in the size of the \emph{scheduling graphs} below), which is more efficient than the cycle-enumeration approach~\cite{ma2021rate} with complexity doubly exponential in $|\mathcal{L}|$.

We review the \emph{scheduling graph} in~\cite{ma2021rate} as follows, which will be generalized later: 
For a collision-free schedule matrix $S$ and integers $T\in \mathbb{N}_{+}$, $k\in \mathbb{Z}$, 
$S[T,k]$ denotes the submatrix of $S$ consisting of columns  $kT,kT+1,\ldots,(k+1)T-1$. 
If a submatrix $S'$ is formed by $T$ columns of  $S$, its columns are indexed by $0,1,\ldots,T-1$.

\begin{definition}[Scheduling Graph~\cite{ma2021rate}]
\label{def:schedule_graph}
Given a network $\mc N$ and an integer $T>0$, 
the \emph{scheduling graph}  $(\mc M_{T}, \mc E_{T})$ is a directed graph that is defined as follows: 
the vertex set $\mc M_{T}$ includes all the $|\mc L|\times T$ binary matrices $A$ such that $A = S[T,0]$ for a certain collision-free schedule $S$. 
The edge set $\mc E_{T}$ includes all the vertex pairs $(A,B)$ such that $A = S[T,0]$ and $B = S[T,1]$ for a certain collision-free schedule $S$. 
\end{definition}

\begin{remark}
    For a network with negligible propagation delays, the scheduling graph $(\mc M_{T}, \mc E_{T})$ is a complete graph. 
\end{remark}

In~\cite{ma2021rate}, it has been shown that by choosing $T\geq \max_{l\in \mc L} \max_{l'\in \mc I(l)}|D(l,l')| $, calculating the scheduling region is equivalent to searching all the simple cycles in the scheduling graph, which is NP-hard. 
The scheduling problem then may then even have doubly exponential complexity, since the number of vertices in $(\mc M_{T}, \mc E_{T})$ increases exponentially with respect to $|\mc L|$, and the cycle enumeration in $(\mc M_{T}, \mc E_{T})$ is also NP-hard in general.

Instead of enumerating cycles for the scheduling region, we search \emph{maximum-mean-cycles} (which can be solved in a time complexity \emph{polynomial} in the graph size) in a new graph, to find the vertices of the scheduling region. 
We may only need a few rate vectors to solve the MMF or MCMF problem.

Before describing our approach, we need some graphical concepts. 
In a directed graph $\mc G$, a \emph{path} is a sequence of vertices $v_0,v_1 \ldots, v_m$ where for $i=0,1,\ldots,m-1$, $(v_i, v_{i+1})$ is a directed edge. 
A path is \emph{closed} if $v_0=v_m$. 
A \emph{cycle} in $\mc G$ is a closed path $(v_0,v_1 \ldots, v_m)$ such that $m\geq 1$, $v_i\neq v_j$ for any $0\leq i\neq j\leq m-1$ and $v_0=v_m$, i.e., in such a sequence, the only repeated vertices are the first and the last vertices. 
Note a closed path can be decomposed into  a sequence of cycles~\cite{gleiss2001circuit}, and this has been used in proving that it suffices to enumerate all the simple cycles for calculating the scheduling rate region~\cite{ma2021rate}. 
In a graph where each edge is associated with a weight, we say the weight of a directed cycle is the total weight on the edges in the cycle. 
Then we say the average weight of a directed cycle is the total weight divided by the number of edges in the cycle. 
The \emph{maximum-mean-cycle} is the cycle in the given weighted, directed graph with the maximum average weight over all directed cycles in the given graph.

It has been proved in~\cite{ma2021rate, yang2023wireless_tit} that a collision-free, periodic schedule is equivalent to a closed path (which can be decomposed to multiple simple cycles) in $(\mc M_T, \mc E_T)$ and vice versa, i.e., the concatenation of a sequence of vertices in $(\mc M_T, \mc E_T)$ (which are matrices of size $|\mc L|\times T$) forms a periodic, collision-free schedule. 
In this paper, 
we define a \emph{weighted scheduling graph} and use the maximum-mean-cycle in it to solve~\eqref{eq::inner_prod}.

\begin{definition}[Weighted Scheduling Graph]
\label{def::weighted_SG}
Given a weight vector $\mathbf{a}\in \mathbb{R}^{|\mathcal{L}|}$ and a scheduling graph $(\mc M_T, \mc E_T)$ whose vertices are matrices of size $|\mc L|\times T$, a \emph{weighted scheduling graph} $(\mc M_T, \mc E_T, w_\mathbf{a})$ is a directed, weighted graph defined as follows: the vertex set is still $\mc M_T$, and each edge is associated with a weight. 
For a directed edge $(v_1, v_2)$ in $(\mc M_T, \mc E_T)$, there is a weighed, directed edge $(v_1, v_2)$ in $(\mc M_T, \mc E_T, w_\mathbf{a})$ with weight $w_\mathbf{a}(v_1, v_2) =\mathbf{a}^\intercal v_2 \mathbf{1}$, where $\mathbf{1} = [1,\ldots,1]^\intercal \in \mathbb{R}^{T}$. 
\end{definition}

Since each achievable rate vector can be achieved by a periodic, collision-free schedule, which corresponds to a cycle in  $(\mc M_T, \mc E_T)$~\cite{ma2021rate}, we have the following result.

\begin{lemma}
\label{lemma::weighted_SG}
    For a weighted scheduling graph $(\mc M_T, \mc E_T, w_\mathbf{a})$ and its maximum-mean-cycle  $\mc C = (v_0, v_1, \ldots, v_m)$ with $m\geq 0$ and $v_0=v_m$, 
    the concatenation of the vertices in $\mc C$ gives a periodic schedule $S'$ such that $R_{S'} \in \argmax_{R\in \mc R} \langle\mathbf{a}, R\rangle$. 
\end{lemma}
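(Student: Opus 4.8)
The plan is to reduce the linear optimization $\argmax_{R\in\mc R}\langle\mathbf{a},R\rangle$ of \eqref{eq::inner_prod} to a maximum-mean-cycle search in $(\mc M_T,\mc E_T,w_\mathbf{a})$, exploiting the correspondence between periodic collision-free schedules and closed paths in the scheduling graph from \cite{ma2021rate}. Assuming $T$ is chosen large enough, i.e., $T\geq\max_{l\in\mc L}\max_{l'\in\mc I(l)}|D(l,l')|$ as in \cite{ma2021rate}, combining the results there gives that $\mc R$ is the convex hull of the finitely many rate vectors $R_C$ obtained by concatenating the vertices of the simple cycles $C$ of $(\mc M_T,\mc E_T)$. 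Since $\mc R$ is thus a polytope, the linear functional $\langle\mathbf{a},\cdot\rangle$ attains its maximum over $\mc R$ at one of these generators, so I first reduce to showing $\max_{R\in\mc R}\langle\mathbf{a},R\rangle=\max_{C}\langle\mathbf{a},R_C\rangle$, the right-hand maximum ranging over simple cycles $C$ of $(\mc M_T,\mc E_T)$.

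Next I would carry out the key computation: identifying $\langle\mathbf{a},R_C\rangle$ with a fixed multiple of the mean weight of $C$ in $(\mc M_T,\mc E_T,w_\mathbf{a})$. For a simple cycle $C=(v_0,v_1,\ldots,v_m)$ with $v_0=v_m$, concatenating $v_0v_1\cdots v_{m-1}$ and repeating is a periodic collision-free schedule of period $mT$, so by \eqref{eq::rate_vectors} its rate vector is $R_C=\frac{1}{mT}\sum_{k=0}^{m-1}v_k\mathbf{1}$, where each $v_k$ is read as an $|\mc L|\times T$ binary matrix and $\mathbf{1}\in\mathbb{R}^{T}$; hence $\langle\mathbf{a},R_C\rangle=\frac{1}{mT}\sum_{k=0}^{m-1}\mathbf{a}^\intercal v_k\mathbf{1}$. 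On the other hand, by Definition~\ref{def::weighted_SG} the total $w_\mathbf{a}$-weight of $C$ is $\sum_{k=0}^{m-1}w_\mathbf{a}(v_k,v_{k+1})=\sum_{k=0}^{m-1}\mathbf{a}^\intercal v_{k+1}\mathbf{1}=\sum_{k=0}^{m-1}\mathbf{a}^\intercal v_k\mathbf{1}$, the last equality using $v_m=v_0$; dividing by the edge count $m$ shows that the mean weight of $C$ equals $T\langle\mathbf{a},R_C\rangle$. Consequently $\max_{C}\langle\mathbf{a},R_C\rangle=\frac{1}{T}$ times the mean weight of the maximum-mean-cycle $\mc C$, which equals $\langle\mathbf{a},R_{S'}\rangle$ where $S'$ is the concatenation of the vertices of $\mc C$; since $R_{S'}\in\mc R$ this yields $R_{S'}\in\argmax_{R\in\mc R}\langle\mathbf{a},R\rangle$, as claimed.

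Two subsidiary points need care. First, why it is enough to range over \emph{simple} cycles rather than arbitrary closed paths: a closed path decomposes into simple cycles \cite{gleiss2001circuit}, its rate vector is a convex combination, weighted by edge counts, of the rate vectors of those cycles (and its mean weight the corresponding weighted average of their mean weights), so no closed path can strictly beat the best simple cycle — which is exactly why $\mc R$ is generated by the $R_C$. Second, the degenerate case $m=0$ in the statement: if the maximum-mean-cycle has no edges then the claim is vacuous, but this is avoided whenever $(\mc M_T,\mc E_T)$ has an edge, e.g.\ the self-loop at the matrix of the trivial single-link schedule, so in practice $m\geq 1$.

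I expect the main, and essentially only, obstacle to be the indexing bookkeeping in the second paragraph: Definition~\ref{def::weighted_SG} assigns the weight of an edge $(v_1,v_2)$ to the \emph{head} $v_2$, so one must verify that summing these weights around a closed cycle, where the identification $v_m=v_0$ closes the loop, reproduces precisely $\sum_{k}\mathbf{a}^\intercal v_k\mathbf{1}$, i.e., the time-averaged weighted rate of the concatenated schedule. Everything else is a direct invocation of the polytope characterization of $\mc R$ and the closed-path-to-simple-cycle decomposition, both from \cite{ma2021rate, gleiss2001circuit}.
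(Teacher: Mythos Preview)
Your proposal is correct and follows essentially the same approach as the paper: the core step in both is the computation showing that for a cycle $C=(v_0,\ldots,v_m)$ the inner product $\langle\mathbf{a},R_C\rangle$ equals $\frac{1}{mT}\sum_{j}\mathbf{a}^\intercal v_j\mathbf{1}$, i.e., $1/T$ times the mean weight of $C$, whence the maximum-mean-cycle yields the argmax. The paper's proof stops after this calculation, whereas you additionally spell out the polytope reduction to simple cycles, the head-indexing bookkeeping from Definition~\ref{def::weighted_SG}, and the degenerate case $m=0$; these are welcome clarifications but do not change the argument's substance.
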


\begin{proof}
    For any rate vector $R\in \mc R$, suppose it is achieved by a schedule $S$ that corresponds to a cycle $ (v_0, v_1, \ldots, v_m)$ in $(\mc M_T, \mc E_T)$, 
    by the definition of rate vectors~\eqref{eq::rate_vectors}. Given $\mathbf{a} = (a_i)_{i=1,\ldots,|\mc L|}$, 
    \begin{align*}
        \langle\mathbf{a}, R\rangle 
        & = \sum_{i=1}^{|\mc L|} a_i  R(l_i) \\
        & =  \sum_{i=1}^{|\mc L|} a_i \cdot \frac{1}{mT} \sum_{t=0}^{mT-1} S(l_i, t)\\
        &=   \frac{1}{mT} \sum_{i=1}^{|\mc L|}  \sum_{t=0}^{mT-1} a_i \cdot S(l_i, t)\\
        &= \sum_{j=0}^{m-1}\left( \frac{1}{mT}  \sum_{i=1}^{|\mc L|} \sum_{k=0}^{T-1} a_i \left(v_j(i,k)\right)
        \right) \\
        &= \frac{1}{mT} \sum_{j=0}^{m-1} \mathbf{a}^\intercal v_j \mathbf{1},
    \end{align*}
    which states that $\langle\mathbf{a}, R\rangle$ equals to the average over values $\mathbf{a}^\intercal v_j \mathbf{1}$ for $j=0,1,\ldots,m-1$. 
\end{proof}

Therefore, given a vector $\mathbf{a}$, finding a vector that solves~\eqref{eq::inner_prod} is equivalent to finding the maximum-mean-cycle in $(\mc M_T, \mc E_T, w_\mathbf{a})$, which is a widely studied problem~\cite{karp1978characterization, dasdan1998faster} that can be solved with time complexity $\Theta(nm)$, where $n$ is the number of nodes and $m$ is the number of edges in the graph. 
A classical algorithm is the Karp's algorithm~\cite{karp1978characterization}, which is briefly reviewed below for the sake of completeness.

Given the graph with vertex set $\mc V$ and a source node $s\in \mc V$, for each $v\in \mc V$ and every non-negative integer $k$, suppose $F_k(v)$ denotes the maximum weight of a length-$k$ path from $s$ to $v$, and we say $F_k(v)=-\infty$ if such a path does not exist. 
Then the maximum cycle mean $\lambda^*$ can be derived from the following theorem, whose proof can be found in~\cite{karp1978characterization}.

Note we assume the graph $(\mc M_T, \mc E_T)$ is strongly connected, and hence $(\mc M_T, \mc E_T, w_\mathbf{a})$ is also strongly connected. 
Otherwise, we find the strongly connected components (with linear time complexity), search for the maximum-mean-cycle in each component and choose the one with the largest cycle mean.

\begin{theorem}[Karp's Theorem~\cite{karp1978characterization}]
Given a strongly connected graph, the maximum cycle mean $\lambda^*$ is given by
\begin{equation}
\lambda^* = \max_{v\in \mc V} \min_{0\leq k\leq n-1} \frac{F_n(v) - F_k(v)}{n-k},
\end{equation}
where $\mc V$ is the vertex set of the graph. 
\end{theorem}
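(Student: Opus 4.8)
The plan is to follow Karp's classical three-step argument: normalize the maximum cycle mean to zero, prove one inequality by a pigeonhole/cycle-deletion argument, and prove the reverse inequality by a more delicate construction on a zero-weight cycle. For the normalization, I would first observe that the claimed identity is invariant under a uniform shift of the edge weights: replacing every weight $w(e)$ by $w(e)-\lambda$ for a constant $\lambda$ lowers the mean of every directed cycle by exactly $\lambda$, so $\lambda^*$ becomes $\lambda^*-\lambda$, while each $F_k(v)$ becomes $F_k(v)-k\lambda$, so every ratio $(F_n(v)-F_k(v))/(n-k)$ --- and hence the right-hand side --- also drops by exactly $\lambda$. Taking $\lambda=\lambda^*$ reduces the problem to the case $\lambda^*=0$, i.e. every directed cycle has nonpositive weight and at least one cycle has weight exactly $0$. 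Strong connectivity guarantees that $s$ reaches every vertex by a walk of length at most $n-1$, so the inner minimum is over a nonempty set of finite values; the degenerate cases with no length-$n$ walk are absorbed by the convention $F_n(v)=-\infty$.

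For the easy inequality, after normalization I must show $\max_{v}\min_{0\le k\le n-1}(F_n(v)-F_k(v))/(n-k)\le 0$. Fixing $v$, take a maximum-weight walk of length exactly $n$ from $s$ to $v$ (if none exists the bound is trivial); with $n$ edges it visits $n+1$ vertices, so by pigeonhole it repeats a vertex and hence contains a closed sub-walk, which decomposes into directed cycles and therefore has nonpositive weight. Deleting it leaves a walk from $s$ to $v$ of some length $k<n$ and weight $\ge F_n(v)$, so $F_k(v)\ge F_n(v)$, the ratio at that $k$ is $\le 0$, and the inner minimum (hence the outer maximum) is $\le 0$. The same cycle-deletion observation also shows that $d(v):=\max_{0\le k\le n-1}F_k(v)$ equals the supremum, over all lengths, of the weight of a walk from $s$ to $v$, and in particular $F_n(v)\le d(v)$.

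For the reverse inequality I must exhibit a vertex $v^*$ with $F_n(v^*)=d(v^*)$, so that $\min_{0\le k\le n-1}(F_n(v^*)-F_k(v^*))/(n-k)\ge 0$. I would fix a zero-weight cycle $C=(c_0,c_1,\dots,c_{r-1},c_r=c_0)$ and let $\alpha_i$ be the weight of the arc of $C$ from $c_0$ to $c_i$; since $\alpha_r=0$, the forward arc of $C$ from $c_i$ to $c_j$ has weight $\alpha_j-\alpha_i$ whether or not it wraps around. Then I would pick the index $i^*$ maximizing $d(c_{i^*})-\alpha_{i^*}$, take a maximum-weight walk from $s$ to $c_{i^*}$ of some length $k_0\le n-1$ and weight $d(c_{i^*})$, and extend it by walking around $C$ for exactly $n-k_0$ further edges. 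This yields a walk of length exactly $n$ ending at some vertex $c_j$ of $C$ with weight $d(c_{i^*})+(\alpha_j-\alpha_{i^*})\ge d(c_j)$, the inequality being exactly the maximality of $i^*$. Hence $F_n(c_j)=d(c_j)$ and $v^*:=c_j$ works, so the right-hand side is $\ge 0$; combining with the easy inequality proves the identity.

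The step I expect to be the main obstacle is this reverse inequality: the naive idea of prepending an optimal walk and then looping a zero-weight cycle does not by itself deliver a walk of the \emph{exact} length $n$ that ends at a vertex whose potential $d(\cdot)$ it attains, and the resolution --- choosing the entry point on $C$ so as to maximize the potential difference $d(c_i)-\alpha_i$ and exploiting the telescoping of arc weights along $C$ --- is the genuine content of Karp's argument. Additional care is needed with the $-\infty$ bookkeeping and, when the graph is not strongly connected, with the reduction to strongly connected components and taking the best cycle mean among them.
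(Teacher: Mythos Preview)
Your proposal is correct and is precisely Karp's original argument from the cited reference: normalize so that $\lambda^*=0$, obtain the upper bound by pigeonhole cycle deletion on any length-$n$ optimal walk, and obtain the lower bound by entering a zero-weight cycle at the vertex maximizing $d(c_i)-\alpha_i$ and then padding to length exactly $n$. The edge cases with $F_k(v)=-\infty$ are handled as you indicate.

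There is nothing to compare against in the paper itself: the authors do not prove this theorem at all. They state it as background, write ``whose proof can be found in~\cite{karp1978characterization},'' and move on; the result is used only as a black box to justify that the maximum-mean-cycle subroutine inside Algorithm~\ref{alg::sche_NC_2} runs in time polynomial in the size of the scheduling graph. So your write-up supplies strictly more than what the paper contains for this statement.
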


$F_k(v)$ can be given by a recurrence relation in~\cite{karp1978characterization}, where $\mc E$ denote  the edge set and $w(u,v)$ denotes the weight on $(u,v)$: 
\begin{equation*}
F_k(v) = \max_{(u,v)\in \mc E}[F_{k-1}(u) +w(u,v) ], \qquad k=1,2,\ldots,n
\end{equation*}
with the initial conditions $F_0(s)=0$ and $F_0(v)=-\infty$, $v\neq s$. 
The Karp's algorithm computes $F_k(v)$ recurrently for $k=0,1,\ldots,n $ and $v\in\mc V$. 
For more related discussions, we refer the readers to~\cite{karp1978characterization, dasdan1998faster}. 
Finally, we present our Algorithm~\ref{alg::sche_NC_2}.

\begin{algorithm}[tb]
\caption{Algorithm for Networks with Delays}
\label{alg::sche_NC_2}
\textbf{Input}: a network $\mc N=(\mc V, \mc L,\mc I, D)$

\textbf{Output}: maximum multiflow $v$	
\begin{algorithmic}[1]
\State Start with any rate vector $R_1\in \mc R$, $\mathcal{R}_1 \leftarrow \{R_1\}$
\State Construct $(\mc M_T, \mc E_T)$ from $\mc N$
\For{$i=1,2,\ldots$}
\State Run $v_i \leftarrow$LP-MMF$(\mc R_i)$ (or $v_i \leftarrow$LP-MCMF$(\mc R_i)$) and obtain the dual vector $\mu_i$
\State Construct $(\mc M_T, \mc E_T, w_{\mu_i})$ by $(\mc M_T, \mc E_T)$ and $\mu_i$
\State Find maximum-mean-cycle in $(\mc M_T, \mc E_T, w_{\mu_i})$ and obtain $R_{i+1} \leftarrow \argmax\limits_{R\in\mc R} \langle\mu_i, R\rangle $
\State $\mc R_{i+1} \leftarrow \conv \left(\mc R_i\cup \{R_{i+1}\}\right) $
\If{$\langle 
\mu_{i}, R_{i+1}
\rangle = \max_{R\in \mc R_{i}}
\langle 
\mu_{i}, R
\rangle $}\\
$\qquad $
\Return $v_i$
\EndIf
\EndFor
\end{algorithmic}
\end{algorithm}

\begin{theorem}[Optimality]
For a network $\mc N=(\mc V, \mc L,\mc I, D)$, Algorithm~\ref{alg::sche_NC_2} will terminate and output the maximum multiflow (or the maximum concurrent multiflow). 
\end{theorem}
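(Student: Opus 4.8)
The plan is to mirror the structure of the proof of Theorem~\ref{thm::optimality_alg1}, reducing the claim for Algorithm~\ref{alg::sche_NC_2} to that earlier result by showing that the only step that differs --- line~$6$, which computes $R_{i+1} \in \argmax_{R\in\mc R}\langle\mu_i,R\rangle$ via a maximum-mean-cycle --- produces exactly the same kind of output as the ILP step in Algorithm~\ref{alg::sche_NC_1}. Concretely, I would first invoke Lemma~\ref{lemma::weighted_SG} (with $\mathbf{a} = \mu_i$) to conclude that the maximum-mean-cycle in $(\mc M_T,\mc E_T,w_{\mu_i})$ yields a periodic collision-free schedule $S'$ whose rate vector $R_{i+1} = R_{S'}$ satisfies $R_{i+1}\in\argmax_{R\in\mc R}\langle\mu_i,R\rangle$, so that \eqref{eq:Ri_argmax} holds verbatim. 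For this I should first note that, by the choice $T \ge \max_{l\in\mc L}\max_{l'\in\mc I(l)}|D(l,l')|$ of~\cite{ma2021rate}, the scheduling graph $(\mc M_T,\mc E_T)$ faithfully represents all periodic collision-free schedules, and that every achievable rate vector is attained by such a schedule (as established in~\cite{ma2021rate}); the strong-connectivity caveat is handled by the remark preceding Karp's Theorem about decomposing into strongly connected components.

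Once line~$6$ is certified to return a genuine maximizer of $\langle\mu_i,\cdot\rangle$ over the full polytope $\mc R$, I would argue termination exactly as before: $\mc R$ is a polytope with finitely many vertices; the maximum-mean-cycle always returns a vertex of $\mc R$ (a rate vector achieved by a periodic schedule, hence an extreme point up to taking the correct cycle --- or at worst the algorithm still only ever enlarges $\mc R_i$ by convex-hull vertices and cannot cycle forever because once no new vertex improves $\langle\mu_i,\cdot\rangle$ the termination test fires); hence after finitely many iterations the test $\langle\mu_i,R_{i+1}\rangle = \max_{R\in\mc R_i}\langle\mu_i,R\rangle$ must succeed. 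For the optimality half, I would reuse the argument from Theorem~\ref{thm::optimality_alg1} essentially unchanged: at the terminating iteration $i'$, combining the termination condition with \eqref{eq:Ri_argmax} gives $\max_{R\in\mc R}\langle\mu_{i'},R\rangle = \max_{R\in\mc R_{i'}}\langle\mu_{i'},R\rangle$; the Lagrangian/subgradient analysis of LP-MMF$(\mc R_{i'})$ (resp. LP-MCMF$(\mc R_{i'})$) shows the optimal $R^*$ both maximizes $\mathrm{flow}(R) - \langle\mu_{i'},R\rangle$ over $\mathbb{R}^{|\mc L|}_{\ge 0}$ and maximizes $\langle\mu_{i'},R\rangle$ over $\mc R_{i'}$; and then for any $R'\in\mc R$ the same three-line chain of inequalities yields $\mathrm{flow}(R^*) \ge \mathrm{flow}(R')$, so $R^*$ is MMF- (or MCMF-) optimal over the entire rate region.

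The main obstacle --- really the only substantive point beyond bookkeeping --- is justifying that the maximum-mean-cycle computation genuinely solves \eqref{eq::inner_prod} over \emph{all} of $\mc R$, not merely over rate vectors realizable by length-$T$-periodic schedules of a particular form. Lemma~\ref{lemma::weighted_SG} supplies the key identity that $\langle\mathbf{a},R\rangle$ for a schedule traversing the cycle $(v_0,\dots,v_m)$ equals the average of the edge weights $\mathbf{a}^\intercal v_j\mathbf{1}$, i.e.\ the cycle mean; one then needs the converse direction from~\cite{ma2021rate} --- that every vertex (extreme achievable rate vector) of $\mc R$ arises as the per-slot average of some closed path in $(\mc M_T,\mc E_T)$, which decomposes into simple cycles, so the supremum of $\langle\mathbf{a},R\rangle$ over $\mc R$ is attained at a single simple cycle and hence equals the maximum cycle mean. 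I would make this the one careful paragraph of the proof and let everything else inherit from the proof of Theorem~\ref{thm::optimality_alg1}, since Remark~\ref{rmk::ILP} already flags that Algorithm~\ref{alg::sche_NC_1} and its optimality proof were designed precisely to pave the way here.
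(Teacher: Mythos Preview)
Your proposal is correct and follows essentially the same approach as the paper: invoke Lemma~\ref{lemma::weighted_SG} to certify that the maximum-mean-cycle step produces a genuine maximizer $R_{i+1}\in\argmax_{R\in\mc R}\langle\mu_i,R\rangle$, substitute this into the termination condition to recover \eqref{eq:max_same}, and then inherit the remaining termination and optimality arguments verbatim from the proof of Theorem~\ref{thm::optimality_alg1}. If anything, your paragraph on why the maximum-mean-cycle actually optimizes over all of $\mc R$ (via the closed-path decomposition result of~\cite{ma2021rate}) is more careful than the paper, which simply defers that point to Lemma~\ref{lemma::weighted_SG}.
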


The proof is similar to the proof of Theorem~\ref{thm::optimality_alg1}, though we use Lemma~\ref{lemma::weighted_SG} to justify that in iteration $i'$ when it terminates, the maximum-mean-cycle in $(\mc M_T, \mc E_T, \mu_{i'})$ gives 
\begin{equation*}
    R_{i'+1}=\argmax\limits_{R\in\mc R} \langle\mu_{i'}, R\rangle,
\end{equation*} and then we substitute it into  \begin{equation*}
    \langle \mu_{i'}, R_{i'+1} \rangle = \max_{R\in \mc R_{i'}} \langle \mu_{i'}, R \rangle,
\end{equation*}
which gives us the same result with~\eqref{eq:max_same}.
The remaining steps are similar with the proof of Theorem~\ref{thm::optimality_alg1} and hence are omitted.

\begin{remark}
Karp’s algorithm can find one maximum-mean-cycle in a time complexity polynomial in $|\mc E_T|$, the number of edges in the weighted scheduling graph. Nevertheless, $|\mc E_T|$ is exponential in $|\mathcal{L}|$, the number of links in the communication network. 
Therefore, the overall running time is at least exponential in $|\mathcal{L}|$, since we still desire the exact-optimal results. 
However, we note that finding the entire scheduling rate region generally has a time complexity exponential in $|\mc E_T|$, which should be doubly exponential in $|\mathcal{L}|$. 
As demonstrated in experiments in Section~\ref{sec::simu}, our algorithms can be significantly faster than the two-step method for both MMF and MCMF problems even in very simple networks, since we do not need the entire scheduling rate region. 
\end{remark}

We use $\mc N_{4,1}^{D=1}$ to illustrate our algorithm, which was also used in~\cite{ma2021rate, yang2023wireless_tit} in particular to illustrate their scheduling scheme. 
However, we focus on the weighted scheduling graph and the maximum-mean-cycle scheme, and our Algorithm~\ref{alg::sche_NC_2} based on them, which are not discussed in~\cite{ma2021rate, yang2023wireless_tit}.

\begin{example}
\label{eg::4_line}

We use $\mc N_{4,1}^{D=1}$ as an example (see Figure~\ref{fig:weighted_N41}). 
To first construct the scheduling graph, we find $T= \max\limits_{l\in \mc L} \max\limits_{l'\in \mc I(l)}|D(l,l')|=1 $ and denote the scheduling graph as $(\mc M_1, \mc E_1)$. 
The vertex set includes matrices of size $4\times 1$, each of which is a column of some collision-free schedules.

We have a vertex set $\mc M_1$ $ \{v_0,v_1,\ldots,v_8\}$, where
\begin{align}
&v_0 =\begin{bmatrix}
0\\  0\\  0\\  0
\end{bmatrix}, v_1 =\begin{bmatrix}
1\\  0\\  0\\  0
\end{bmatrix}, v_2 =\begin{bmatrix}
0\\  1\\  0\\  0
\end{bmatrix},  v_3 =\begin{bmatrix}
0\\  0\\  1\\  0 
\end{bmatrix},
v_4 =\begin{bmatrix}
0\\  0\\  0\\  1 
\end{bmatrix},\nonumber\\
& v_5 =\begin{bmatrix}
1\\  0\\  0\\  1
\end{bmatrix},  v_6 =\begin{bmatrix}
1\\  1\\  0\\  0
\end{bmatrix}, v_7 =\begin{bmatrix}
0\\  1\\  1\\  0
\end{bmatrix},v_8 =\begin{bmatrix}
0\\  0\\  1\\  1 
\end{bmatrix}. \label{eq::N41_vex}
\end{align}
The edge set $\mc E_1$ can be described as an adjacency matrix~\cite{ma2021rate}: 
\begin{equation}
\label{eq::N41_edge}
\begin{blockarray}{cccccccccc}
&v_0&v_1&v_2&v_3&v_4&v_5&v_6&v_7&v_8\\
\begin{block}{c[ccccccccc]}
v_0&1&1&1&1&1&1&1&1&1\\
v_1&1&1&0&1&1&1&0&0&1\\
v_2&1&1&1&0&1&1&1&0&0\\
v_3&1&1&1&1&0&0&1&1&0\\
v_4&1&1&1&1&1&1&1&1&1\\
v_5&1&1&0&1&1&1&0&0&1\\
v_6&1&1&0&0&1&1&0&0&0\\
v_7&1&1&1&0&0&0&1&0&0\\
v_8&1&1&1&1&0&0&1&1&0\\
\end{block}
\end{blockarray}.
\end{equation}
By implementing the algorithms in~\cite{ma2021rate, yang2023wireless_tit} we can find the scheduling rate region $\mc R$, whose vertices are: 
\begin{equation}
\label{eq::N41_region}
\begin{bmatrix}
1/2\\
1/2\\
1/2\\
1/2
\end{bmatrix},
\begin{bmatrix}
1/2\\
1/2\\
1/2\\
0
\end{bmatrix},
\begin{bmatrix}
0\\
1/2\\
1/2\\
1/2
\end{bmatrix},
\begin{bmatrix}
1\\
0\\
0\\
0
\end{bmatrix},
\begin{bmatrix}
0\\
1\\
0\\
0
\end{bmatrix},
\begin{bmatrix}
0\\
0\\
1\\
0
\end{bmatrix},
\begin{bmatrix}
0\\
0\\
0\\
1
\end{bmatrix},
\begin{bmatrix}
1\\
0\\
0\\
1
\end{bmatrix},
\begin{bmatrix}
0\\
0\\
0\\
0
\end{bmatrix}.
\end{equation}

Then we show that we can find the maximum flow without the need of the entire scheduling rate region~\eqref{eq::N41_region}.

Suppose we start with a rate vector $R_1=\begin{bmatrix}
0&1&0&0
\end{bmatrix}^\intercal$ and in the first iteration we solve the following linear program
\begin{align*}
\max  \quad& v=R(l_1)=R(l_2)=R(l_3)=R(l_4)\\
\text{s.t. } & R(l_i) \leq 0, \qquad i=1,3,4,\\
         & R(l_2) \leq 1, \\
         &  R(l_i) \geq 0, \qquad i=1,2,3,4.
\end{align*}
We obtain the maximum flow $ v=0$ and the corresponding dual vector is $\mu_1 = \begin{bmatrix}
\epsilon&0&\epsilon&0
\end{bmatrix}^\intercal$ with some $\epsilon>0$. 
We then use $\mu_1$ to construct the corresponding weighted scheduling graph. 
Consider the vertex set~\eqref{eq::N41_vex}, 
for $i=0,1,\ldots,8$, 
the weights on the edges of the weighted scheduling graph are: 
\begin{align*}
    w(v_i, v_0) = 0, \,\,\, w(v_i, v_1) = \epsilon,\,\,\, w(v_i, v_2) = 0,\\
    w(v_i, v_3) = \epsilon,\,\,\, w(v_i, v_4) = 0, \,\,\, w(v_i, v_5) = \epsilon,\\
    w(v_i, v_6) = \epsilon ,\,\,\, w(v_i, v_7) = \epsilon, \,\,\, w(v_i, v_8) = \epsilon.
\end{align*}
We can verify that there are $9$ vertices and $56$ edges. 
By the maximum-mean-cycle algorithm, we solve $\argmax_{R\in \mc R} \langle \mu_1, R_1\rangle$ and find $R_2 = \begin{bmatrix}
\frac{1}{2}&\frac{1}{2}&\frac{1}{2}&\frac{1}{2}
\end{bmatrix}^\intercal$. 
We can check that the algorithm terminates in the next step.

Compared to the vertices of $\mc R$ shown in~\eqref{eq::N41_region}, our approach only needs to find two of them. 
The initial starting point ($R_1$) is easy to find by searching an arbitrary maximal independent set of $\mc N_{4,1}^{D=1}$ (shown in Figure~\ref{fig:weighted_N41}). 
Moreover, we note that the maximum flow value $1/2$ indeed achieves the upper bound of line networks with utilizing nonzero delays, as proved in~\cite{bai2017throughput}. 
\end{example}

\section{Performance Evaluation}
\label{sec::simu}

In this section, we numerically demonstrate the efficiency of our algorithms, across various network settings
\footnote{All the implementations are based on Python, and executed on a laptop computer with i7-8550u CPU with Python 3.7.}. 
We should note that although various optimization frameworks~\cite{traskov2012scheduling, jones2012optimal, cui2007distributed, wiese2016scheduling, niati2012throughput} exist (see Section~\ref{subsec::review_opt} for a review) that can either output approximate solutions for general networks or exact solutions for restricted networks, our primary focus is on provably returning exact solutions for general networks, for a better understanding of the fundamental limits of general networks. 
Therefore, we compare our algorithm with the conventional two-step method (i.e., first calculate the entire scheduling rate region and then solve the MMF or MCMF problem), which guarantees to find optimal solutions on general networks like ours. 

\begin{itemize}
    \item We evaluate Algorithm~\ref{alg::sche_NC_1} on random networks; 

    \item We evaluate Algorithm~\ref{alg::sche_NC_2} on different line networks that have also been used and evaluated in~\cite{ma2021rate, yang2023wireless_tit, bai2017throughput}. 
\end{itemize}

Note that Algorithm~\ref{alg::sche_NC_2} can also be evaluated on random networks, but we choose to perform experiments on different line networks because: 
1) they are the networks discussed in the literature~\cite{ma2021rate, yang2023wireless_tit, bai2017throughput, chitre2012throughput, fan22isit}; 
2) due to the complexity of the scheduling part~\cite{ma2021rate, yang2023wireless_tit}, the two-step method is unable to handle networks even of moderate size, hence to compare with the two-step method we are somewhat restricted to small networks; 
3) we show that our algorithms yield significant improvements even in simple network settings.

\subsection{Algorithm~\ref{alg::sche_NC_1} on Random Networks}
\label{subsec::alg1_RandNet}

We first evaluate Algorithm~\ref{alg::sche_NC_1} on networks of $N$ ($3\leq N\leq 80$) nodes with random topology. 
As discussed in Section~\ref{sec::no_delay_net}, we let the networks be acyclic. 
For a given number of nodes, we randomly generate connected networks. 
We assume each link has a unit capacity, and we still use the $1$-hop interference model. 
We study \emph{multicast} case (in comparison,  existing works may only work on unicast case~\cite{jain2005impact, wan2009multiflows, zhou2013maximum}). 
We randomly choose $1$ node as the source and $2$ different nodes as the sink nodes. 
We do not specify any network coding mechanisms in particular, any network coding scheme can be directly applied (e.g., similar to~\cite{zhou2013maximum} we can utilize the COPE system~\cite{katti2006xors}). 
The main objective is to show the advantages of only calculating a subset of the scheduling rate region, while the conventional two-step method needs to calculate the scheduling region by searching maximal independent sets in the link conflict graph~\cite{jain2005impact} before sovling the MMF problem.

We compare the required time to solve the MMF problem by the two approaches. 
In Figure~\ref{fig:random_graph}, the required time (in logarithmic scale) is plotted against $3\leq N\leq 80$, the sizes of the random networks. 
Note for each $N$, we randomly generate an $N$-node network for $5$ times, which may have different topology. 
We randomly choose the source and sink nodes, and measure the average time used. 
Note the comparison is between the time used by two methods to find the optimal MMF values, and therefore the reduction in search time by our algorithm does not sacrifice any scheduling performance.

Figure~\ref{fig:random_graph} shows that our algorithm consistently outperforms the two-step method, and it can be significantly faster when the network size becomes large. 
When the network size is large, the scheduling region (which is a polytope) not only becomes hard to compute, but also leads to two difficulties: 1) the rate vectors (the vertices of the polytope) are calculated in the form of vectors, but to iteratively update the scheduling region, we need to convert the set of vectors ($V$-representation) to a set of inequalities ($H$-representation), and the conversion suffers high complexity; 2) the termination condition of the algorithm needs to decide whether a vector achieves an optimum, which can also have high cost when the scheduling region is large. 
Terminating in a small number of steps is important (which is the main advantage of our algorithms), since the update of a subset of the scheduling region is a critical bottleneck and should be performed as little as possible.

\begin{figure}
	\centering
    \includegraphics[scale = 0.27]{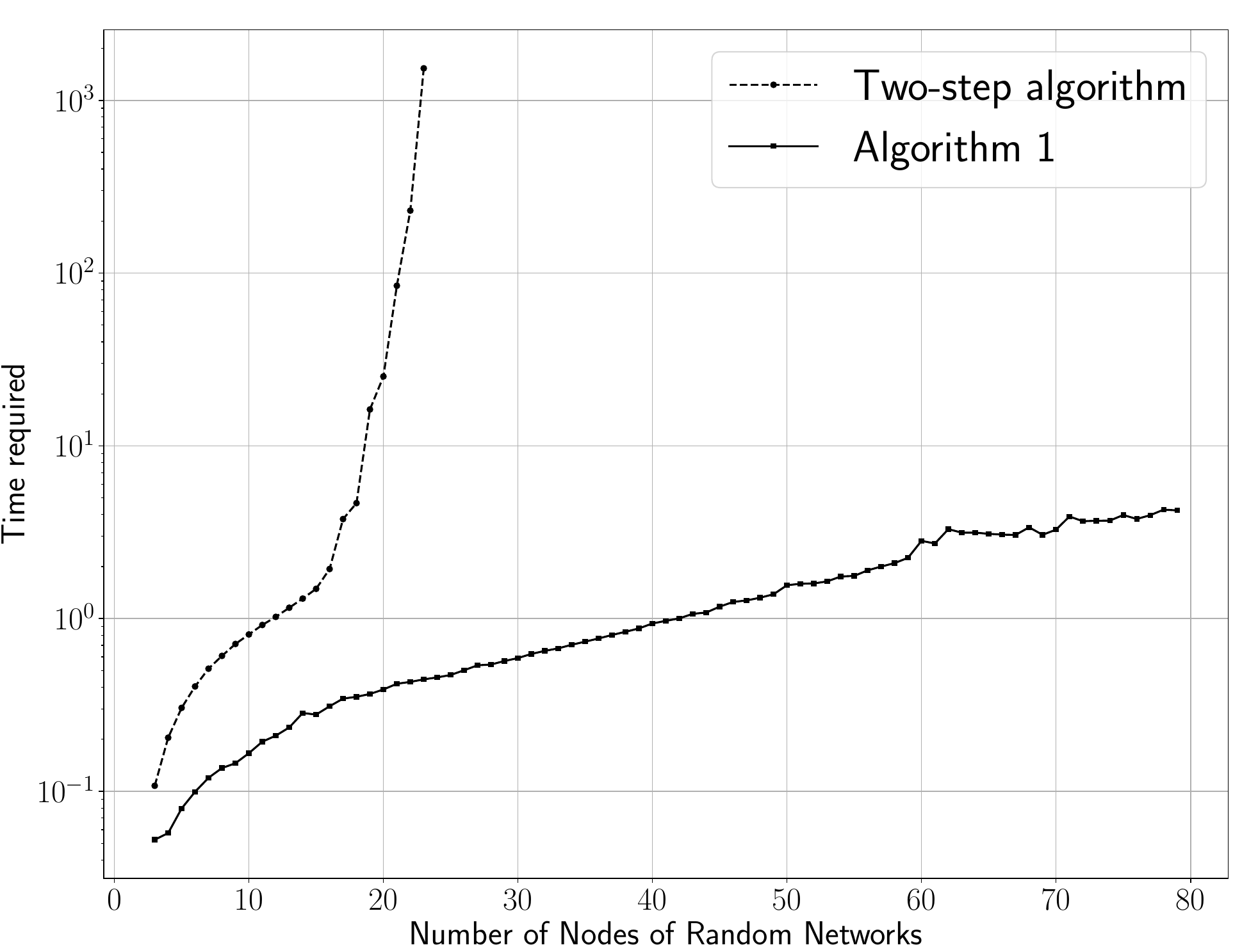}
	\caption{The $x$-axis is the number of nodes in random networks. The $y$-axis (in logarithmic scale) is the time (seconds) to calculate the MMF value. 
    } 
	\label{fig:random_graph}
\end{figure}

\subsection{Algorithm~\ref{alg::sche_NC_2} on Line Networks}

\begin{figure*}
	\centering
    \includegraphics[scale = 0.3]{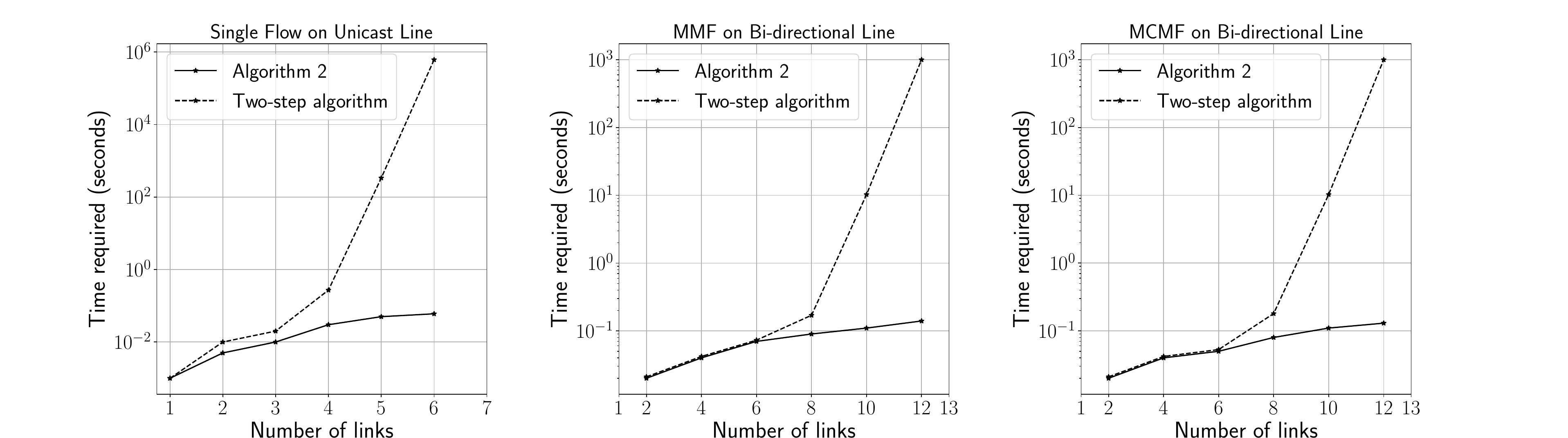}
	\caption{The $x$-axis is the number of links and the $y$-axis (in logarithmic scale) is the required time (seconds) to calculate the maximum (concurrent) multiflow value. 
    The left figure is for single-flow maximization in unicast line network $\mc N_{L,K}^{D=1}$; the middle figure is for the MMF problem in bi-directional line networks; the right figure is for the MCMF problem in bi-directional line networks. } 
	\label{fig:evaluation}
\end{figure*}

Next, we evaluate Algorithm~\ref{alg::sche_NC_2} on various line networks with non-negligible delays~\cite{ma2021rate, yang2023wireless_tit, bai2017throughput, chitre2012throughput} to show that not only do we outperform the conventional method based on~\cite{ma2021rate, yang2023wireless_tit}, but also that our algorithm can be significantly faster even in simple networks. 
Due to the hardness of the scheduling problem with non-negligible delays~\cite{ma2021rate, yang2023wireless_tit, bai2017throughput, chitre2012throughput}, the two-step solution can only handle networks of small sizes. 
We choose $D=1$ for simplification. 
We consider three tasks: single flow, MMF and MCMF maximizations:

\begin{enumerate}
    \item \textbf{Single flow maximization}:  
    Consider a unicast line network under the $1$-hop interference model $\mc N_{L,1}^{D=1}$. 
    The first node is the source, and the last node is the only sink.

    \item \textbf{MMF problem}: 
    Consider bi-directional line networks under the \emph{single collision domain} model~\cite{fan22isit, chitre2012throughput}, i.e., each link can collide with all the other links. 
    There are two information flows in the network, one from node $1$ to node $N$, and another from node $N$ to node $1$.

    \item \textbf{MCMF problem}: 
    We use the same bi-directional line networks with the same interference model. 
    For the two flows, we require that the desired traffic rate of one flow is half of the desired traffic rate of the other flow. 
\end{enumerate}

Similar to Section~\ref{subsec::alg1_RandNet}, we compare our algorithm with the two-step method. However, instead of enumerating the maximal independent sets in the conflict graph, we use the cycle-enumeration method~\cite{ma2021rate, yang2023wireless_tit} to calculate the scheduling rate region in the two-step method.

The performance evaluation is shown in Figure~\ref{fig:evaluation}. 
The running time (in logarithmic scale) is plotted against the number of links $L$. 
The figure demonstrates that the time required by our algorithms to achieve the optimal (concurrent) multiflow values is significantly lower. 
Since the comparison is between the time required to find the optimal MMF and MCMF values, the reduction in search time by our algorithm does not compromise scheduling performance. 
Additionally, it is evident that due to the NP-hardness of the scheduling problem, the required time for solving the multiflow problem by the two-step scheme in networks with non-negligible delays increases doubly-exponentially fast, which aligns with theoretical studies in~\cite{ma2021rate} and shows that the scheduling region is the key bottleneck, and hence we can have significant gains by only calculating a subset of the scheduling region.

For example, the algorithm in~\cite{ma2021rate} needs to search $7653$ rate vectors to characterize the scheduling region of $\mc N_{4,1}^{D=1}$, which has $9$ vertices as shown in~\eqref{eq::N41_region}, while we only need to find $\mathbf{2}$ of them for solving the MMF (or MCMF) problem. 
For $\mc N_{6,1}^{D=1}$ whose rate region has $57$ vertices, we only need $\mathbf{4}$ of them. 
The two-step method requires more than a week to calculate the scheduling rate region of $\mc N_{6,1}^{D=1}$, while Algorithm~\ref{alg::sche_NC_2} can solve the MMF problem in less than $\mathbf{1}$ \textbf{second}. 
In more complicated settings (e.g., random networks as discussed in Section~\ref{subsec::alg1_RandNet}), the scheduling problem becomes even harder, which might make our joint approach more preferable.

\section{Conclusion}
\label{sec::conclusion}

In large-scale wireless systems, although the maximum (concurrent) multiflow problem is extremely important for understanding network capacity, it is NP-hard even in simple network settings, making it computationally prohibitive to calculate. 
While there exist optimization frameworks that offer good performance and low complexity in practice, either providing approximate results for general networks or exact results for restricted networks, in this work we are more interested in a general approach that can always output exact-optimal solutions for general networks. 
In this paper, we provide algorithms that jointly solve the MMF and MCMF problems, as well as the scheduling problem, in a general multi-source multi-sink network with network coding allowed and propagation delays potentially utilized in scheduling. 
Our algorithms only require a subset of the scheduling rate region for the MMF and MCMF problems, making them much more efficient without sacrificing solution accuracy or scheduling performance. 
We theoretically prove that our algorithms output optimal solutions in a finite number of iterations and use simulation results to demonstrate the advantages of our approach.

\section{Future Works}
In Section~\ref{sec::delay_net}, we utilize the maximum-mean-cycle algorithms in solving the MMF (or MCMF) problem. 
We find that these algorithms can also be used to calculate the throughput (the objective in~\cite{bai2017throughput, chitre2012throughput}) or the entire scheduling rate region (the objective in~\cite{ma2021rate, yang2023wireless_tit}) as follows. 
The convex hull method~\cite{lassez1992quantifier} is an algorithm for finding the vertices of an unknown polytope $\mathcal{P} \subseteq \mathbb{R}^n$, given an oracle that can find $\mathrm{argmax}_{x \in \mathcal{P}}\langle x, a \rangle$ for $a \in \mathbb{R}^n$.
By using the maximum-mean-cycle algorithm as the oracle to the convex hull method, we can iteratively search for the vertices of the entire scheduling rate region. 
This method can be more efficient than the cycle-enumeration method in $(\mc M_T, \mc E_T)$~\cite{ma2021rate}, since we do not need to enumerate all the simple cycles in $(\mc M_T, \mc E_T)$. 
Detailed implementations of such algorithms for the scheduling problem is left for future study. 

\section{Acknowledgements}

The work of Shenghao Yang was partially supported by the National Key
R\&D Program of China under Grant 2022YFA1005000.
The work of Cheuk Ting Li was partially supported by
two grants from the Research Grants Council of the Hong
Kong Special Administrative Region, China [Project No.s:
CUHK 24205621 (ECS), CUHK 14209823 (GRF)]. 

\ifshortver
\newpage
\bibliographystyle{IEEEtran}
\bibliography{ref.bib}
\else
\bibliographystyle{IEEEtran}
\bibliography{ref.bib}
\fi

\end{document}